\documentclass[hidelinks,conference,hidelinks]{IEEEtran}

\usepackage{amsmath,amssymb,amsthm}
\usepackage{bm}
\usepackage{bbm}
\usepackage{orcidlink}
\usepackage{enumerate}
\usepackage{graphicx}
\usepackage{algorithmic}
\usepackage{algorithm}
\usepackage{cite}
\usepackage{hyperref}
\usepackage{xcolor}
\usepackage{booktabs}
\usepackage{array}
\usepackage{multirow}
\usepackage[acronym,shortcuts]{glossaries}
\usepackage{threeparttable}
\usepackage{cleveref}
\usepackage{balance}

\theoremstyle{definition}
\newtheorem{definition}{Definition}
\theoremstyle{remark}
\newtheorem{remark}{Remark}
\theoremstyle{plain}
\newtheorem{theorem}{Theorem}
\newtheorem{proposition}{Proposition}

\newcommand{\abs}[1]{|#1|}      
\newcommand{\norm}[1]{\|#1\|}   

\newcommand{\var}{\mathrm{Var}}
\newcommand{\figph}[1]{%
\IfFileExists{#1}{\includegraphics[width=\columnwidth]{#1}}%
{\framebox[0.9\columnwidth]{\rule{0pt}{3cm}\footnotesize\ttfamily\detokenize{#1}}}}

\newacronym{GaBP}{GaBP}{Gaussian belief propagation}
\newacronym{GAMP}{GAMP}{generalized approximate message passing}
\newacronym{AMP}{AMP}{approximate message passing}
\newacronym{CBM}{CBM}{continuous Bernoulli-Mises}
\newacronym{RDPC}{RDPC}{Radially-Discrete Phase-Continuous}
\newacronym{LE}{LE}{linear estimator}
\newacronym{MAP}{MAP}{maximum a posteriori}
\newacronym{MMSE}{MMSE}{minimum mean square error}
\newacronym{LMMSE}{LMMSE}{linear minimum mean square error}
\newacronym{SPA}{SPA}{sum-product algorithm}
\newacronym{CLT}{CLT}{central limit theorem}
\newacronym{LLN}{LLN}{law of large numbers}
\newacronym{SE}{SE}{state evolution}
\newacronym{MSE}{MSE}{mean square error}
\newacronym{m-SE}{m-SE}{mismatched state evolution}
\newacronym{EM}{EM}{expectation maximization}
\newacronym{ARD}{ARD}{automatic relevance determination}
\newacronym{CPT}{CPT}{continuous phase trap}
\newacronym{APSK}{APSK}{amplitude and phase-shift keying}
\newacronym{QAM}{QAM}{quadrature amplitude modulation}
\newacronym{KL}{KL}{Kullback-Leibler}
\newacronym{MIMO}{MIMO}{multiple-input multiple-output}
\newacronym{SNR}{SNR}{signal-to-noise ratio}
\newacronym{BER}{BER}{bit error rate}
\newacronym{UBAE}{UBAE}{universal blind amplitude equalizer}
\newacronym{MOS}{MOS}{model order selection}
\newacronym{PMF}{PMF}{probability mass function}
\newacronym{SRAM}{SRAM}{static random-access memory}
\newacronym{ASIC}{ASIC}{application-specific integrated circuit}
\newacronym{FPGA}{FPGA}{field programmable gate array}
\newacronym{ALU}{ALU}{arithmetic logic unit}
\newacronym{vMBP}{vMBP}{von Mises belief propagation}
\newacronym{PSK}{PSK}{phase-shift keying}
\newacronym{SotA}{SotA}{state-of-the-art}
\newacronym{2D}{2D}{two dimensional}
\newacronym{1D}{1D}{one dimensional}
\newacronym{AWGN}{AWGN}{additive white Gaussian noise}
\newacronym{VN}{VN}{variable node}
\newacronym{FN}{FN}{factor node}
\newacronym{ASG}{ASG}{asymptotic SNR gap}
\newacronym{iid}{i.i.d.}{independent and identically distributed}
\newacronym{OD-GaBP}{OD-GaBP}{orbital detection Gaussian belief propagation}
\newacronym{OD-GAMP}{OD-GAMP}{orbital detection generalized approximate message passing}
\newacronym{OD-AMP}{OD-AMP}{orbital detection approximate message passing}
\newacronym{OD}{OD}{orbital detection}
\newacronym{DCT}{DCT}{dominated convergence theorem}
\newacronym{GMI}{GMI}{generalized mutual information}
\newacronym{LPD}{LPD}{linear phase denoiser}
\newacronym{ML}{ML}{maximum likelihood}
\newacronym{CGR}{CGR}{continuous Gaussian relaxation}
\newacronym{I-MMSE}{I-MMSE}{information minimum mean square error}
\newacronym{ASK}{ASK}{amplitude-shift keying}
\newacronym{CRLB}{CRLB}{Cram\'{e}r-Rao lower bound}
\newacronym{MVUE}{MVUE}{minimum-variance unbiased estimator}
\newacronym{RS}{RS}{replica-symmetric}
\newacronym{EP}{EP}{expectation propagation}
\newacronym{EC}{EC}{expectation consistent}
\newacronym{GLM}{GLM}{generalized linear model}
\newacronym{SER}{SER}{symbol error rate}
\newacronym{LLR}{LLR}{log-likelihood ratio}
\newacronym{AEP}{AEP}{asymptotic equipartition property}
\newacronym{LDP}{LDP}{large deviation principle}
\newacronym{B-CBM}{$B$-CBM}{$B$-level continuous Bernoulli-Mises}
\newacronym{MUI}{MUI}{multi-user interference}
\newacronym{OAMP}{OAMP}{orthogonal approximate message passing}
\newacronym{MAMP}{MAMP}{memory approximate message passing}
\newacronym{PAM}{PAM}{pulse amplitude modulation}

\newcommand\blfootnote[1]{%
  \begingroup
  \renewcommand\thefootnote{}%
  \footnote{#1}%
  \addtocounter{footnote}{-1}%
  \endgroup
}

\begin{document}

\title{Orbital Detection: On Maximum-Entropy Priors}

\author{\IEEEauthorblockN{Kuranage~Roche~Rayan~Ranasinghe$^\star$, Takumi Takahashi$^\dag$, and Giuseppe Thadeu Freitas de Abreu$^\star$}\\[-2.5ex]
\IEEEauthorblockA{\textit{$^\star$School of Computer Science and Engineering, Constructor University, Bremen, Germany} \\
\textit{$^\dag$Graduate School of Engineering, Osaka University, Suita, Japan} \\
Emails: \{kranasinghe, gabreu\}@constructor.university, takahashi@comm.eng.osaka-u.ac.jp}
\vspace{-4ex}}

\maketitle

\begin{abstract}
Soft-input detection over a discrete constellation $\mathcal{M}$ of cardinality $M$ requires computing a posterior whose mean and mode are respectively given by the \ac{MMSE} and \ac{MAP} estimates, both of which incur a computational cost of order $\mathcal{O}(M)$ per symbol.
We show that this cost is reduced\footnote{Continuous ring relaxations of this kind have appeared before; e.g. \cite{Tanahashi2011,Hara2026,Suresh2026}, as heuristic approximations of the symbol distribution, and the associated Bessel-ratio denoiser is also known, such that neither is claimed here.
Our contribution is therefore to show that the construction is not a heuristic.} to $\mathcal{O}(L)$, where $L \le M$ is the number of distinct amplitudes (rings), once the discrete prior is replaced by its maximum-entropy counterpart subject to the same radial marginal. 
This \emph{orbital prior}, which is a mixture of uniform circular shells, is obtained by maximizing a mixed discrete-continuous entropy.
We prove in this paper that such a distribution is the only distribution on $\mathbb{C}$ that preserves the amplitude statistics of $\mathcal{M}$ exactly while remaining maximally noncommittal in phase. 
Under the \ac{AWGN} channel, the orbital prior induces a closed-form posterior that factors into a softmax over the $L$ rings and a von Mises phase distribution whose concentration is supplied entirely by the observation, yielding closed-form orbital \ac{MMSE} and \ac{MAP} detectors of the discrete symbol at $\mathcal{O}(L)$ cost.
The resulting hierarchical rule selects the ring by posterior \emph{mass} and the phase by conditional \emph{mode}.
We compare the pairwise ring boundary with that of the joint posterior-\emph{density} and quantify the leading-order outward shift at high \ac{SNR}.
Numerical results using standard constellations confirm that the orbital detectors maintain similar \ac{SER} performance to optimal detectors, at a fraction of the complexity.
\end{abstract}

\vspace{-2ex}
\begin{IEEEkeywords}
Orbital detection, maximum-entropy, directional statistics, posterior computation, soft-input detection.
\end{IEEEkeywords}

\glsresetall

\blfootnote{A complete journal article detailing \ac{OD} has been submitted to the Transactions on Information Theory~\cite{ranasinghe2026orbital}.}

\vspace{-3ex}
\section{Introduction}
\label{sec:intro}
Detection over a discrete constellation is the computation of a posterior. 
Given a noisy observation, the \ac{MMSE} estimate is the posterior mean and the \ac{MAP} estimate is its mode. Each of these weighs or compares all $M$ constellation points, at a cost of $\mathcal{O}(M)$ per symbol. 
In modern iterative receivers built on \ac{AMP}, \ac{GAMP}, and \ac{GaBP}~\cite{Donoho2009,Bayati2011,Rangan2011,ranganVAMP2019,ma2017orthogonal,Cespedes2014}, this per-symbol \emph{denoiser}, which computes the posterior mean and variance from an effective scalar observation, is evaluated millions of times. 
Its cost is therefore the dominant bottleneck, while the posterior information it produces drives the iteration toward its fixed point~\cite{Rangan2011,Guo2005,GuoWuShamaiVerdu_TIT_2011}.

The standard inexpensive surrogate replaces the discrete prior by an energy-matched complex Gaussian, yielding the linear \ac{MMSE} (Wiener) detector. 
This is cheap, but it discards \emph{all} structure of $\mathcal{M}$ and therefore stalls iterative detection well short of the discrete fixed point~\cite{tse2005fundamentals,proakis2007digital}. 
A gap therefore separates the crude Gaussian from the exact discrete distribution, the evaluation of which costs $\mathcal{O}(M)$.

The question we address in this article is whether a principled relaxation exists which retains the part of $\mathcal{M}$ that matters while remaining cheap to evaluate.
Relaxations of this kind are not new.
For conditional-mean \ac{MIMO} detection, Tanahashi and Ochiai~\cite{Tanahashi2011} replace the discrete symbol distribution by a continuous one, their uniform ring approximation for \ac{PSK} and \ac{APSK} spreading infinitely many points over each amplitude ring, so that the resulting weighted sum of Dirac shells coincides with the prior studied here and integrates against the Gaussian likelihood in closed form via the modified Bessel functions $I_0$ and $I_1$.
In a different context, Hara \emph{et al.}~\cite{Hara2026} reach the same $I_1/I_0$ denoiser on a single ring as the zero-concentration, non-informative case of a von Mises prior for reciprocity calibration, and related constant-modulus constructions have been employed for massive \ac{MIMO} detection~\cite{Suresh2026}.
We therefore claim neither the continuous ring relaxation nor the scalar Bessel-ratio denoiser as contributions of this work.
What has been missing is a justification of \emph{why} this particular relaxation is the right choice, and whether it is the only distribution on $\mathbb{C}$ preserving the amplitude statistics of $\mathcal{M}$ while remaining maximally flexible in phase: the prior art posits the ring density as a convenient approximation, close to the constellation and integrable in closed form, and then discards the Bessel expressions in favor of asymptotic surrogates for implementation~\cite{Tanahashi2011}.
We show instead that it needs no such justification by convenience, being the unique answer to a variational problem.

We answer this question through the lens of maximum entropy~\cite{Jaynes1957}. 
Retaining only the \emph{radial marginal} of the constellation, that is, the distribution of its amplitude, and maximizing a mixed discrete-continuous entropy~\cite{Renyi1959,WuVerd2010}, we obtain the \emph{orbital prior}, a mixture of uniform circular shells, and prove that it is the unique distribution on $\mathbb{C}$ that preserves the amplitude statistics of $\mathcal{M}$ exactly while remaining maximally noncommittal in phase. 

Under \ac{AWGN}, this prior induces a closed-form posterior whose phase distribution is von Mises~\cite{mardia2009directional} with concentration supplied by the observation, and whose ring weights form a softmax over $L \le M$ amplitude levels. 
Consequently, the orbital \ac{MMSE} and \ac{MAP} detectors are closed-form and cost only $\mathcal{O}(L)$ per symbol.

Our contributions are as follows.
Section~\ref{sec:orbital_prior} derives, by the chain rule, the mixed discrete-continuous entropy whose unique maximizer is the orbital prior, characterized thereby as the only amplitude-preserving, phase-noncommittal relaxation of $\mathcal{M}$ (Theorem~\ref{thm:maxent}).
Section~\ref{sec:orbital_posterior} derives the closed-form orbital posterior and the resulting $\mathcal{O}(L)$ \ac{MMSE} and \ac{MAP} detectors, retaining the Bessel functions exactly and obtaining a hierarchical \ac{MAP} rule with no counterpart in earlier ring constructions, together with the exact pairwise ring threshold of the joint density mode, the high-\ac{SNR} expansion of the posterior-mass threshold, and their strict ordering whenever the density threshold is positive (Proposition~\ref{prop:ring_boundary}).
Finally, Section~\ref{sec:numerical} offers numerical evidence that \ac{OD} matches the exact discrete \ac{SER} to within a small high-\ac{SNR} gap, at $\mathcal{O}(L)$ cost.

\emph{Notation:} Random variables and their realizations are respectively denoted by sans-serif and italic letters; e.g., ($\mathsf{x},\mathsf{y},\mathsf{z}$) and ($x,y,z$), with tilded counterparts $\tilde{\mathsf{x}},\tilde{x}$ for the relaxed quantities under the orbital approximation, and calligraphic letters (e.g. $\mathcal{M}$) for sets, $\abs{\mathcal{A}}$ being the cardinality of $\mathcal{A}$.
For $z \in \mathbb{C}$, $\norm{z}$ and $\angle z$ denote its modulus and phase, $j \triangleq \sqrt{-1}$, and $\mathcal{CN}(\mu,\sigma^{2})$ is the circularly symmetric complex Gaussian distribution.
We write $\Pr(\cdot)$, $p_{\,\cdot}(\cdot)$, $\mathbb{E}[\cdot]$ and $\var[\cdot]$ for probability, density, expectation and variance, and abbreviate conditioning on $\{\mathsf{y}=y\}$ under the true channel as $\mid y$, and on $\{\tilde{\mathsf{y}}=\tilde{y}\}$ under the orbital channel of Section~\ref{sec:orbital_posterior} as $\mid \tilde{y}$.
Here, $H(\cdot)$ and $h(\cdot)$ are the discrete and differential entropies (in nats), $I_{\nu}(\cdot)$ the modified Bessel function of the first kind of order $\nu$~\cite{Abramowitz1965,DLMF}, $\mathrm{vM}(\theta;\mu,\kappa)$ the von Mises density of mean direction $\mu$ and concentration $\kappa$, $\delta(\cdot)$ the Dirac delta, $\triangleq$ equality by definition, and $\mathcal{O}(\cdot)$ the Landau symbol.

\vspace{-0.5ex}
\section{Foundations of Orbital Detection}
\label{sec:system}
\vspace{-0.5ex}

A realization $x$ of a discrete symbol $\mathsf{x}$ is observed through a scalar \ac{AWGN} channel as
\vspace{-1ex}
\begin{equation}
\label{eq:system_model}
\vspace{-1ex}
y = x + z \in \mathbb{C},
\end{equation}
where $z \sim \mathcal{CN}(0,\sigma_{\mathsf{z}}^2)$ is \ac{AWGN} with variance $\sigma_{\mathsf{z}}^2$, and $\mathsf{x}$ is assumed to be drawn from a discrete constellation
\vspace{-1ex}
\begin{equation}
\label{eq:constellation}
\vspace{-1ex}
\mathcal{M} \triangleq \{s_1, s_2, \ldots, s_M\} \subset \mathbb{C},
\end{equation}
of cardinality $M \triangleq \abs{\mathcal{M}}$ with prior $p_m \triangleq \Pr(\mathsf{x} = s_m)$ (trivially, we have $p_m = 1/M$ for equiprobable signaling). 


The constellation is zero-mean, so its average power equals the variance of $\mathsf{x}$, given by
\vspace{-1ex}
\begin{equation}
\label{eq:sigma_x}
\vspace{-1ex}
\sigma_{\mathsf{x}}^2 \triangleq \var[\mathsf{x}] = \mathbb{E}\big[\norm{\mathsf{x}}^2\big] = \sum_{m=1}^{M} p_m \norm{s_m}^2 .
\end{equation}

Bayes' rule then yields the posterior weights
\vspace{-1ex}
\begin{equation}
\label{eq:exact_posterior}
\vspace{-1ex}
\hat{p}_m \triangleq \Pr(\mathsf{x} = s_m \mid y)
     = \frac{p_m \exp\Big(\frac{-\norm{y - s_m}^2}{\sigma_{\mathsf{z}}^2}\Big)}
     {\sum_{m'=1}^{M} p_{m'} \exp\Big(\frac{-\norm{y - s_{m'}}^2}{\sigma_{\mathsf{z}}^2}\Big)},
\end{equation}
whose mean and mode are the \ac{MMSE} and \ac{MAP} estimates, respectively, given by
\vspace{-1ex}
\begin{align}
\label{eq:mmse_exact}
\hat{x}_{\mathrm{MMSE}} &= \mathbb{E}[\mathsf{x} \mid y] = \sum_{m=1}^{M} s_m\, \hat{p}_m, \\
\label{eq:map_exact}
\hat{x}_{\mathrm{MAP}} &= \arg\max_{s_m \in \mathcal{M}} \;\hat{p}_m.
\vspace{-1ex}
\end{align}

Each computes the same posterior at a computational cost of $\mathcal{O}(M)$ per observation, since the \ac{MMSE} mean weights all $M$ points and the \ac{MAP} mode compares all $M$ of them~\cite{YangHanzo2015}.
Reducing this cost is the fundamental objective of this paper.

The geometry of $\mathcal{M}$ is carried by its amplitudes. 
Let $R_1 < R_2 < \cdots < R_L$ be the $L$ distinct values of $\norm{s_m}$ over $s_m \in \mathcal{M}$, which partition the constellation into $L$ rings as
\vspace{-1ex}
\begin{equation}
\label{eq:ring_partition}
\vspace{-1ex}
\mathcal{M}_\ell \triangleq \{\, s_m \in \mathcal{M} \mid \norm{s_m} = R_\ell \,\},
\qquad \ell = 1, \ldots, L,
\end{equation}
which are disjoint and exhaustive. 

Ring $\ell$ holds $M_\ell \triangleq \abs{\mathcal{M}_\ell}$ symbols, with $\sum_{\ell} M_\ell = M$, and carries the \emph{ring prior}
\vspace{-1ex}
\begin{equation}
\label{eq:ring_prior}
r_\ell \triangleq \Pr(\norm{\mathsf{x}} = R_\ell) = \sum_{m \,\mid\, s_m \in \mathcal{M}_\ell} p_m,
\vspace{-1ex}
\end{equation}
which reduces to $r_\ell = M_\ell/M$ for equiprobable signaling. 

We call $\{(R_\ell, r_\ell)\}_{\ell=1}^{L}$ the \emph{radial marginal} of $\mathsf{x}$. 
It is everything the geometry of $\mathcal{M}$ records once phase is ignored, and it is the only structure Section~\ref{sec:orbital_prior} retains in building the least-committal prior. 
Thus, $\mathcal{M}$ supports two descriptions, namely, the full discrete probability space $\{(s_m, p_m)\}_{m=1}^{M}$, which carries phase, and its radial marginal $\{(R_\ell, r_\ell)\}_{\ell=1}^{L}$, which does not.
%

\vspace{-1ex}
\section{The Orbital Prior from Maximum Entropy}
\label{sec:orbital_prior}
We now build the least-committal prior that retains the exact radial marginal $\{(R_\ell, r_\ell)\}$ and nothing more. 
Under the discrete probability space $\{(s_m,p_m)\}$ both the modulus and the phase are discrete, and the chain rule for discrete entropy~\cite[Theorem 2.2.1]{CoverThomas2006} gives
\vspace{-1ex}
\begin{equation}
\label{eq:discrete_entropy}
\vspace{-1ex}
H(\mathsf{x}) = H(\norm{\mathsf{x}}, \angle \mathsf{x}) = H(\norm{\mathsf{x}}) + H(\angle \mathsf{x} \mid \norm{\mathsf{x}}).
\end{equation}

Retaining only $\{(R_\ell, r_\ell)\}$ fixes the radial term, namely, $H(\norm{\mathsf{x}}) = -\sum_\ell r_\ell \ln r_\ell$, thus freeing the phase to be relaxed, producing a new process on $\mathbb{C}$, which is no longer the discrete symbol $\mathsf{x}$ and which we denote by $\tilde{\mathsf{x}}$. This process shares the radial marginal, $\norm{\tilde{\mathsf{x}}} \stackrel{d}{=} \norm{\mathsf{x}}$, but carries a continuous phase via a conditional density $p_{\angle\tilde{\mathsf{x}} \mid \norm{\tilde{\mathsf{x}}}}(\theta \mid R_\ell)$, denoted $q_\ell(\theta)$ on each ring. 
For $\tilde{\mathsf{x}}$, the discrete entropy $H(\angle \mathsf{x} \mid \norm{\mathsf{x}})$ of~\eqref{eq:discrete_entropy} is replaced by the conditional \emph{differential} entropy~\cite[Theorem 9.6.2]{CoverThomas2006}
\vspace{-1ex}
\begin{align}
\label{eq:cond_diff_entropy}
\vspace{-1ex}
h(\angle\tilde{\mathsf{x}} \mid \norm{\tilde{\mathsf{x}}}) &\triangleq \sum_{\ell=1}^{L} r_\ell\, h(\angle\tilde{\mathsf{x}} \mid \norm{\tilde{\mathsf{x}}} = R_\ell) \nonumber \\
&= -\sum_{\ell=1}^{L} r_\ell \int_{-\pi}^{\pi} q_\ell(\theta) \ln q_\ell(\theta)\, \mathrm{d}\theta,
\vspace{-1ex}
\end{align}
yielding the \emph{mixed entropy}
\vspace{-1ex}
\begin{align}
\label{eq:entropy_chain}
J(\tilde{\mathsf{x}}) &\triangleq H(\norm{\tilde{\mathsf{x}}}) + h(\angle\tilde{\mathsf{x}} \mid \norm{\tilde{\mathsf{x}}})  \\
&= \underbrace{-\sum_{\ell=1}^{L} r_\ell \ln r_\ell}_{\text{fixed by }\{(R_\ell,r_\ell)\}}
\;+\; \underbrace{\sum_{\ell=1}^{L} r_\ell\, h\big(\angle\tilde{\mathsf{x}} \mid \norm{\tilde{\mathsf{x}}} = R_\ell\big)}_{\text{free}} . \nonumber
\end{align}

Formally, $J(\tilde{\mathsf{x}})$ is the entropy of $\tilde{\mathsf{x}}$ relative to the \emph{mixed reference measure} given by counting on $\{R_\ell\}$ times Lebesgue on $[-\pi,\pi)$, and is thus a well-defined mixed discrete-continuous entropy~\cite{Renyi1959,WuVerd2010}. 
Taken against this mixed measure rather than against the counting measure, $J(\tilde{\mathsf{x}})$ is \emph{not} comparable in value to the discrete entropy of~\eqref{eq:discrete_entropy}, because it is the entropy of a different random variable against a different reference. Only its structure as a functional of the phase densities $\{q_\ell\}$ matters here.
As $H(\norm{\tilde{\mathsf{x}}})$ is fixed by the radial marginal, maximizing $J(\tilde{\mathsf{x}})$ reduces to maximizing each ring's phase entropy independently.
Since any phase that is not absolutely continuous has $h(\angle\tilde{\mathsf{x}} \mid \norm{\tilde{\mathsf{x}}} = R_\ell) = -\infty$, the maximizer admits a density $q_\ell(\theta)$, so this relaxation is without loss.

On ring $\ell$, the support $[-\pi,\pi)$ is bounded, so normalization $\int_{-\pi}^{\pi} q_\ell(\theta)\,\mathrm{d}\theta = 1$ is the only constraint needed for a proper maximizer. 
Introducing a multiplier $\lambda_\ell$, the Lagrangian is~\cite{Jaynes1957}
\vspace{-1ex}
\begin{equation}
\label{eq:lagrangian}
\vspace{-1ex}
\mathcal{L}[q_\ell, \lambda_\ell]
= -\!\int_{-\pi}^{\pi}\! q_\ell \ln q_\ell\, \mathrm{d}\theta
+ \lambda_\ell\!\left( \int_{-\pi}^{\pi}\! q_\ell\, \mathrm{d}\theta - 1 \right),
\end{equation}
and has the stationarity condition
\vspace{-0.5ex}
\begin{equation}
\label{eq:euler_lagrange}
\vspace{-0.5ex}
\frac{\delta \mathcal{L}}{\delta q_\ell(\theta)} = -\ln q_\ell(\theta) - 1 + \lambda_\ell = 0,
\vspace{-1ex}
\end{equation}
yielding $q_\ell(\theta) = e^{\lambda_\ell - 1}$, which is a constant in $\theta$. 

Normalizing, and using the strict concavity of the entropy for uniqueness gives
\vspace{-1.5ex}
\begin{equation}
\label{eq:uniform_phase}
q_\ell(\theta) = \frac{1}{2\pi}.
\end{equation}

In polar coordinates, $\tilde{x} = \rho\, e^{j\theta}$ with $\rho \equiv \norm{\tilde{x}}$ and $\theta \equiv \angle\tilde{x}$, and the area element is $\mathrm{d}\tilde{x} = \rho\,\mathrm{d}\rho\,\mathrm{d}\theta$.
Ring $\ell$ is confined to $\norm{\tilde{x}} = R_\ell$, so its density relative to $\mathrm{d}\tilde{x}$ has the form $p^{\mathrm o}_\ell(\tilde{x}) = \gamma_\ell(\theta)\,\delta(\norm{\tilde{x}} - R_\ell)$ for some $\gamma_\ell(\theta)$.
Marginalizing the radial direction (where $\norm{\tilde{x}}=\rho$) must reproduce $q_\ell$, giving
\begin{equation}
\label{eq:induced_phase}
\int_0^\infty \gamma_\ell(\theta)\,\delta(\rho - R_\ell)\, \rho\, \mathrm{d}\rho = \gamma_\ell(\theta)\, R_\ell \;\overset{!}{=}\; q_\ell(\theta).
\end{equation}

Therefore, $\gamma_\ell(\theta) = q_\ell(\theta)/R_\ell$ and, using \eqref{eq:uniform_phase}, we have
\begin{equation}
\label{eq:ring_density_final}
p^{\mathrm o}_\ell(\tilde{x}) = \frac{q_\ell(\theta)}{R_\ell}\,\delta(\norm{\tilde{x}} - R_\ell) = \frac{\delta(\norm{\tilde{x}} - R_\ell)}{2\pi R_\ell}.
\end{equation}

We are now ready to formally define the orbital prior.

\begin{definition}[Orbital prior]
\label{def:orbital_prior}
Relative to the polar measure $\mathrm{d}\tilde{x} = \rho\,\mathrm{d}\rho\,\mathrm{d}\theta$, the \emph{orbital prior} is the distribution of $\tilde{\mathsf{x}}$ with density
\vspace{-1ex}
\begin{equation}
\label{eq:orbital_prior}
p^{\mathrm o}(\tilde{x}) \triangleq \sum_{\ell=1}^{L} r_\ell\,p^{\mathrm o}_\ell(\tilde{x})
= \sum_{\ell=1}^{L} r_\ell\, \frac{\delta(\norm{\tilde{x}} - R_\ell)}{2\pi R_\ell},
\end{equation}
a mixture of uniform circular shells.

It is a probability distribution, $\int_{\mathbb{C}} p^{\mathrm o}\, \mathrm{d}\tilde{x} = \sum_{\ell} r_\ell = 1$, and $\tilde{\mathsf{x}}$ shares the same mean and average energy as $\mathsf{x}$: $\mathbb{E}[\tilde{\mathsf{x}}] = 0 = \mathbb{E}[\mathsf{x}]$ and $\mathbb{E}[\norm{\tilde{\mathsf{x}}}^2] = \sum_{\ell} r_\ell R_\ell^2 = \sigma_{\mathsf{x}}^2 = \mathbb{E}[\norm{\mathsf{x}}^2]$.
\end{definition}


\begin{theorem}[Maximum-Entropy Characterization]
\label{thm:maxent}
Among all distributions on $\mathbb{C}$ with radial marginal $\{(R_\ell, r_\ell)\}_{\ell=1}^{L}$, the orbital prior $p^{\mathrm o}$ uniquely maximizes the mixed entropy $J(\tilde{\mathsf{x}})$.
\end{theorem}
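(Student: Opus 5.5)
Any distribution $\tilde{\mathsf{x}}$ on $\mathbb{C}$ with radial marginal $\{(R_\ell,r_\ell)\}$ is supported on the finite union of circles $\norm{\tilde{x}}=R_\ell$. It is therefore determined by the ring probabilities $r_\ell$ together with the $L$ conditional phase laws on $[-\pi,\pi)$. So I would work directly with the decomposition \eqref{eq:entropy_chain} and argue ring by ring.

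The first step is the reduction to densities. If, for some $\ell$ with $r_\ell>0$, the conditional phase law is not absolutely continuous with respect to Lebesgue measure, then its differential entropy is $-\infty$ by convention. The entropy relative to the mixed reference measure is then $-\infty$, so such a distribution cannot be a maximizer. This is true because the orbital prior attains a finite value, namely $-\sum_\ell r_\ell\ln r_\ell+\ln(2\pi)$. Hence we may take each conditional phase law to have a density $q_\ell$. Since the first term of \eqref{eq:entropy_chain} is fixed, $J$ is a nonnegatively weighted sum $\sum_\ell r_\ell h(q_\ell)$ of independent terms.

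The second step, which I expect to carry the real weight, is to replace the Lagrangian heuristic \eqref{eq:euler_lagrange} by a rigorous inequality that gives both maximality and uniqueness. I would use Gibbs' inequality. For any density $q_\ell$ on $[-\pi,\pi)$,
\begin{equation*}
\ln(2\pi)-h(q_\ell)=\int_{-\pi}^{\pi} q_\ell(\theta)\ln\frac{q_\ell(\theta)}{1/(2\pi)}\,\mathrm{d}\theta = D\big(q_\ell\,\big\|\,\tfrac{1}{2\pi}\big)\ge 0,
\end{equation*}
with equality if and only if $q_\ell=1/(2\pi)$ almost everywhere. The inequality follows from Jensen's inequality applied to the strictly convex function $t\mapsto t\ln t$. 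This gives $J(\tilde{\mathsf{x}})\le -\sum_\ell r_\ell\ln r_\ell+\ln(2\pi)$. Equality holds if and only if every ring with $r_\ell>0$ has uniform phase, which recovers \eqref{eq:uniform_phase}. This step also needs care when $h(q_\ell)$ is $-\infty$ or undefined; the relative-entropy form handles those cases, since $D\ge0$ always holds. Rings with $r_\ell=0$ are excluded, because all $r_\ell>0$ by construction of the $R_\ell$ as occupied amplitudes.

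The final step is to identify the equality case with $p^{\mathrm o}$. Given uniform conditional phases and the fixed ring weights, the computation \eqref{eq:induced_phase}--\eqref{eq:ring_density_final} shows that the density relative to $\mathrm{d}\tilde{x}$ is exactly \eqref{eq:orbital_prior}. Uniqueness then holds as equality of distributions, that is, up to phase densities modified on null sets, which does not change the law. The main obstacle is therefore not computational. It is to make the measure-theoretic setup precise enough, through the mixed reference measure and the $-\infty$ convention for singular phase laws, that "among all distributions" genuinely includes non-absolutely-continuous competitors, and that uniqueness is stated as uniqueness of the law.
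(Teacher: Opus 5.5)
Your proof is correct and follows the paper's skeleton. You discard non-absolutely-continuous phase laws because they give $J=-\infty$, then use that $H(\norm{\tilde{\mathsf{x}}})$ is fixed to decouple the rings. You show each ring's phase entropy is uniquely maximized by $1/(2\pi)$ and identify the result with $p^{\mathrm o}$ via \eqref{eq:induced_phase}--\eqref{eq:ring_density_final}.

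The one real difference is the inner step. The paper uses the stationarity condition \eqref{eq:euler_lagrange} and then appeals to strict concavity of entropy for uniqueness. You instead use Gibbs' inequality $\ln(2\pi)-h(q_\ell)=D(q_\ell\,\|\,\tfrac{1}{2\pi})\ge 0$. The paper's route is shorter, but stationarity alone only locates a critical point; concavity is what upgrades it to a global, unique maximizer. At the uniform density the concavity gap is exactly your divergence, so the two arguments rest on the same fact.

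Your version gets the global bound, attainment, and a.e.\ uniqueness from a single inequality, and it covers $h(q_\ell)=-\infty$ without a separate argument. It also makes explicit three points the paper leaves implicit. First, the orbital prior's value $-\sum_\ell r_\ell\ln r_\ell+\ln(2\pi)$ is finite, which is what actually rules out $-\infty$ competitors. Second, $r_\ell>0$ is needed for the claim that $J$ is maximal iff each ring's phase entropy is maximal. Third, uniqueness holds as uniqueness of the law, not of a pointwise density.
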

\begin{proof}
Any phase conditional that is not absolutely continuous (w.r.t.\ Lebesgue on $[-\pi,\pi)$) has $h(\angle\tilde{\mathsf{x}} \mid \norm{\tilde{\mathsf{x}}} = R_\ell) = -\infty$ by the standard extension of differential entropy to singular measures~\cite[\S 8]{CoverThomas2006}, so by \eqref{eq:entropy_chain}, $J(\tilde{\mathsf{x}}) = -\infty$ and cannot maximize. The maximizer therefore admits a density $q_\ell(\theta)$.
By \eqref{eq:entropy_chain}, the radial entropy $H(\norm{\tilde{\mathsf{x}}})$ is fixed, so $J(\tilde{\mathsf{x}})$ is maximized iff each phase entropy $h(\angle\tilde{\mathsf{x}} \mid \norm{\tilde{\mathsf{x}}} = R_\ell)$ is.
By \eqref{eq:euler_lagrange}--\eqref{eq:uniform_phase} and the strict concavity of the entropy, each is uniquely maximized by $1/(2\pi)$.
The unique maximizer therefore has uniform phase on every ring, which is the orbital prior $p^{\mathrm o}$.
\end{proof}

\vspace{-2ex}
\section{The Orbital Posterior}
\label{sec:orbital_posterior}
Definition~\ref{def:orbital_prior} fixes the prior of $\tilde{\mathsf{x}}$, and we now compute the exact posterior it induces under the \emph{orbital channel}
\vspace{-1ex}
\begin{equation}
\label{eq:orbital_channel}
\vspace{-1ex}
\tilde{\mathsf{y}} \;=\; \tilde{\mathsf{x}} + \mathsf{z},
\end{equation}
which shares the noise $\mathsf{z}\sim\mathcal{CN}(0,\sigma_{\mathsf{z}}^2)$ of~\eqref{eq:system_model}. 

Realizations of $\tilde{\mathsf{y}}$ are denoted $\tilde{y}$, mirroring $y$ for $\mathsf{y}$. The orbital detector is the resulting function of $\tilde{y}$. It is applied to data from the true channel by evaluating at $\tilde{y} = y$, and this substitution is the framework's sole approximation.
Since the orbital prior factors into a ring distribution $\{r_\ell\}$ and a uniform phase on each ring, the posterior of $\tilde{\mathsf{x}}$ inherits the same product form, with the channel supplying the phase concentration that the prior withheld.
The orbital \ac{AWGN} likelihood is
\vspace{-1ex}
\begin{equation}
\label{eq:AWGN_likelihood}
\vspace{-1ex}
p_{\tilde{\mathsf{y}} \mid \tilde{\mathsf{x}}}(\tilde{y} \mid \tilde{x}) = \frac{1}{\pi \sigma_{\mathsf{z}}^2} \exp\big(\tfrac{-\norm{\tilde{y} - \tilde{x}}^2}{\sigma_{\mathsf{z}}^2}\big),
\end{equation}
which is identical in functional form to $p_{\mathsf{y}\mid \mathsf{x}}$ since $\tilde{\mathsf{y}}-\tilde{\mathsf{x}}=\mathsf{z}=\mathsf{y}-\mathsf{x}$ in distribution.

Expressing $\tilde{x} = R_\ell e^{j\theta}$ and $\tilde{y} = \norm{\tilde{y}} e^{j \angle \tilde{y}}$, the squared distance expands as $\norm{\tilde{y} - \tilde{x}}^2 = \norm{\tilde{y}}^2 + R_\ell^2 - 2 R_\ell \norm{\tilde{y}} \cos(\theta - \angle \tilde{y})$, such that
\vspace{-2ex}
\begin{equation}
\label{eq:likelihood_polar}
p_{\tilde{\mathsf{y}} \mid \tilde{\mathsf{x}}}\big(\tilde{y} \mid R_\ell e^{j\theta}\big)
= \frac{e^{-(\norm{\tilde{y}}^2 + R_\ell^2)/\sigma_{\mathsf{z}}^2}}{\pi \sigma_{\mathsf{z}}^2}\,
\overbrace{e^{\kappa_\ell \cos(\theta - \angle \tilde{y})}}^{\text{von Mises kernel}},
\end{equation}
with concentration
\vspace{-0.5ex}
\begin{equation}
\label{eq:kappa}
\kappa_\ell \triangleq \frac{2 R_\ell \norm{\tilde{y}}}{\sigma_{\mathsf{z}}^2}.
\end{equation}

The von Mises kernel is not posited but produced by the Gaussian quadratic in polar coordinates, and its concentration is fixed by the channel. 
We remark that the same kernel is obtained in~\cite{Hara2026} as the posterior arising from a von Mises prior of concentration $\beta$, of which~\eqref{eq:likelihood_polar} is the non-informative case $\beta=0$ singled out by Theorem~\ref{thm:maxent}.

Marginalizing the phase against the uniform prior $1/(2\pi)$ via $\int_{-\pi}^{\pi} e^{\kappa \cos\beta}\, \mathrm{d}\beta = 2\pi I_0(\kappa)$ yields the ring marginal 
\vspace{-1ex}
\begin{equation}
\label{eq:ring_marginal}
\vspace{-1ex}
p_{\tilde{\mathsf{y}} \mid \norm{\tilde{\mathsf{x}}}}(\tilde{y} \mid R_\ell)
= \frac{1}{\pi \sigma_{\mathsf{z}}^2}\, e^{-(\norm{\tilde{y}}^2 + R_\ell^2)/\sigma_{\mathsf{z}}^2}\, I_0(\kappa_\ell).
\end{equation}

By Bayes' rule, the ring posterior $\hat{r}_\ell \triangleq \Pr(\norm{\tilde{\mathsf{x}}} = R_\ell \mid \tilde{y})$ is a softmax over ring log-metrics, and discarding the $\ell$-independent factors yields
\vspace{-1ex}
\begin{equation}
\label{eq:ring_logmetric}
\vspace{-1ex}
\Lambda_\ell \triangleq \ln r_\ell - \frac{R_\ell^2}{\sigma_{\mathsf{z}}^2} + \ln I_0(\kappa_\ell),\; \text{with}\;
\hat{r}_\ell = \frac{e^{\Lambda_\ell}}{\sum_{\ell'=1}^{L} e^{\Lambda_{\ell'}}},
\end{equation}
while the phase, conditioned on a ring, is the von Mises distribution
\vspace{-1ex}
\begin{equation}
\label{eq:vm_phase}
\vspace{-1ex}
p_{\angle\tilde{\mathsf{x}} \mid \norm{\tilde{\mathsf{x}}}, \tilde{\mathsf{y}}}(\theta \mid R_\ell, \tilde{y})
= \mathrm{vM}\big(\theta;\, \angle \tilde{y},\, \kappa_\ell\big)
= \frac{e^{\kappa_\ell \cos(\theta - \angle \tilde{y})}}{2\pi I_0(\kappa_\ell)}.
\end{equation}

Collecting these, the orbital posterior factorizes as
\begin{equation}
\label{eq:orbital_posterior}
p_{\norm{\tilde{\mathsf{x}}}, \angle\tilde{\mathsf{x}} \mid \tilde{\mathsf{y}}}(R_\ell, \theta \mid \tilde{y})
= \hat{r}_\ell\, \mathrm{vM}\big(\theta;\, \angle \tilde{y},\, \kappa_\ell\big),
\end{equation}
mirroring the prior of Definition~\ref{def:orbital_prior}, in the sense that the channel converts the uniform prior phase into a von Mises posterior phase whose concentration $\kappa_\ell$ comes entirely from the observation. 

The von Mises mean resultant denotes the Bessel ratio $A(\kappa) \triangleq I_1(\kappa)/I_0(\kappa)$~\cite{Watson1944}. 
Substituting $\beta = \theta - \angle \tilde{y}$ and using $\int_{-\pi}^{\pi} \cos\beta\, e^{\kappa \cos\beta}\, \mathrm{d}\beta = 2\pi I_1(\kappa)$, yields
\vspace{-1ex}
\begin{equation}
\label{eq:bessel_ratio}
\vspace{-1ex}
\mathbb{E}\big[\tilde{\mathsf{x}} \mid \norm{\tilde{\mathsf{x}}} = R_\ell,\, \tilde{y}\big] = R_\ell\, A(\kappa_\ell)\, e^{j \angle \tilde{y}}.
\end{equation}

\vspace{-1ex}
\begin{proposition}[Orbital detectors]
\label{prop:orbital_estimators}
Under the orbital prior, the orbital \ac{MMSE} and \ac{MAP} estimators of $\mathsf{x}$ from the realized $y$ are obtained by evaluating the orbital posterior~\eqref{eq:orbital_posterior} at $\tilde{y}=y$:
\vspace{-3ex}
\begin{align}
\label{eq:mmse_estimate}
\vspace{-1ex}
\hat{x}_{\mathrm{OMMSE}}(y) &\;\triangleq\; \mathbb{E}[\tilde{\mathsf{x}}\mid\tilde{\mathsf{y}}=y] \;=\; e^{j \angle y} \sum_{\ell=1}^{L} \hat{r}_\ell\, R_\ell\, A(\kappa_\ell), \\
\label{eq:map_estimate}
\hat{x}_{\mathrm{OMAP}}(y) &\;\triangleq\; R_{\ell^\star}\, e^{j \angle y},
\qquad \ell^\star = \arg\max_{\ell}\, \Lambda_\ell,
\end{align}
with $\kappa_\ell, \Lambda_\ell, \hat{r}_\ell$ as in~\eqref{eq:kappa}--\eqref{eq:ring_logmetric} taken at $\tilde{y}=y$ (so $\norm{\tilde{y}}=\norm{y}$ and $\angle\tilde{y}=\angle y$). The associated orbital posterior variance is
\vspace{-1.5ex}
\begin{equation}
\label{eq:posterior_var}
\vspace{-1ex}
\var[\tilde{\mathsf{x}}\mid\tilde{\mathsf{y}}=y] \;=\; \sum_{\ell=1}^{L} \hat{r}_\ell\, R_\ell^2 - \norm{\hat{x}_{\mathrm{OMMSE}}(y)}^2.
\end{equation}
\end{proposition}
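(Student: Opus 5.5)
The plan is to read everything off the factorized posterior \eqref{eq:orbital_posterior} with the law of total expectation over the ring index, and then substitute $\tilde{y}=y$ at the end. The substitution is valid because $\kappa_\ell$, $\Lambda_\ell$ and $\hat{r}_\ell$ depend on $\tilde{y}$ only through $\norm{\tilde{y}}$ and $\angle\tilde{y}$.

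For the MMSE expression \eqref{eq:mmse_estimate}, I condition on the ring:
\begin{equation*}
\mathbb{E}[\tilde{\mathsf{x}}\mid\tilde{y}] = \sum_{\ell=1}^{L}\Pr(\norm{\tilde{\mathsf{x}}}=R_\ell\mid\tilde{y})\,\mathbb{E}[\tilde{\mathsf{x}}\mid \norm{\tilde{\mathsf{x}}}=R_\ell,\tilde{y}].
\end{equation*}
Then I insert $\hat{r}_\ell$ from \eqref{eq:ring_logmetric} and the Bessel-ratio conditional mean \eqref{eq:bessel_ratio}, and pull out the common factor $e^{j\angle\tilde{y}}$.

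Behind \eqref{eq:bessel_ratio} sits one small check: the sine component vanishes because $\sin\beta\, e^{\kappa\cos\beta}$ is odd on $[-\pi,\pi)$. I would state this explicitly.

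For the variance \eqref{eq:posterior_var}, I use $\var[\tilde{\mathsf{x}}\mid\tilde{y}] = \mathbb{E}[\norm{\tilde{\mathsf{x}}}^2\mid\tilde{y}] - \norm{\mathbb{E}[\tilde{\mathsf{x}}\mid\tilde{y}]}^2$. On ring $\ell$ the modulus is deterministic, $\norm{\tilde{\mathsf{x}}}^2=R_\ell^2$, independently of the von Mises phase. So the second moment is just $\sum_\ell \hat{r}_\ell R_\ell^2$.

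The MAP statement \eqref{eq:map_estimate} is the only step needing care, and I expect it to be the main obstacle. The posterior is a mixed measure: it has mass over the ring index and a density over the phase. "Mode" therefore has to be interpreted hierarchically, as the abstract signals: first the ring of maximal posterior \emph{mass}, then the phase of maximal conditional density.

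For the ring, $\hat{r}_\ell$ is a softmax of $\Lambda_\ell$, hence monotone in $\Lambda_\ell$, which gives $\ell^\star=\arg\max_\ell\Lambda_\ell$. For the phase, the von Mises density \eqref{eq:vm_phase} with $\kappa_{\ell^\star}\ge 0$ is maximized at $\theta=\angle\tilde{y}$, since $\cos$ peaks at $0$. When $\norm{y}>0$ this maximizer is unique. When $y=0$ every $\kappa_\ell=0$ and any phase is a mode, so I would note this degenerate null-probability case separately.

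Combining the two steps gives $R_{\ell^\star}e^{j\angle y}$. The cost of every quantity is $\mathcal{O}(L)$, since each requires only one evaluation of $I_0$ and $I_1$ per ring.
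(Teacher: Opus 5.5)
Your proposal is correct and follows essentially the same route as the paper. You use total expectation over the ring index with the Bessel-ratio conditional mean, the deterministic modulus $R_\ell$ for the second moment, and the hierarchical mass-then-mode reading of the MAP via monotonicity of the softmax in $\Lambda_\ell$; your added remarks on the vanishing sine integral and the degenerate $y=0$ case are harmless refinements that the paper leaves implicit.
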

\vspace{-1ex}
\begin{proof}
Under the orbital model, the conditional mean is $\mathbb{E}[\tilde{\mathsf{x}}\mid\tilde{y}] = \sum_\ell\hat{r}_\ell\,\mathbb{E}[\tilde{\mathsf{x}}\mid\norm{\tilde{\mathsf{x}}}=R_\ell,\tilde{y}]$, which by~\eqref{eq:bessel_ratio} equals $e^{j\angle\tilde{y}}\sum_\ell\hat{r}_\ell R_\ell A(\kappa_\ell)$. The variance follows from $\mathbb{E}[\norm{\tilde{\mathsf{x}}}^2\mid\tilde{y}]=\sum_\ell\hat{r}_\ell R_\ell^2$ since $\norm{\tilde{\mathsf{x}}}=R_\ell$ on ring $\ell$. Substituting $\tilde{y}=y$ yields~\eqref{eq:mmse_estimate} and~\eqref{eq:posterior_var}.
The \ac{MAP} estimate maximizes the mixed posterior~\eqref{eq:orbital_posterior} hierarchically.
As only the von Mises kernel depends on $\theta$ and $\kappa_\ell\ge 0$, the conditional mode is $\theta^\star=\angle\tilde{y}$ for every ring, selected by its posterior mass $\hat{r}_\ell$, and since the softmax normalizer is $\ell$-independent, $\arg\max_\ell\hat{r}_\ell = \arg\max_\ell\Lambda_\ell$. Substituting $\tilde{y}=y$ yields~\eqref{eq:map_estimate}.
\end{proof}

\vspace{-1ex}
Both detectors require $L$ Bessel evaluations against the $\mathcal{O}(M)$ of~\eqref{eq:exact_posterior}. The framework's sole approximation is operational, namely evaluating the orbital posterior of $\tilde{\mathsf{x}}$ at $\tilde{y}=y$, the realized observation from the true channel $\mathsf{y}=\mathsf{x}+\mathsf{z}$.

For the mixed posterior~\eqref{eq:orbital_posterior}, ``maximizing'' is meant hierarchically, in the sense that the ring is chosen by its probability \emph{mass} $\hat{r}_\ell$, which is invariant to phase reparametrization and Bayes-optimal for ring classification under 0--1 loss in the orbital model, and the phase by its conditional \emph{mode} $\angle y$. 
Consequently~\eqref{eq:map_estimate} is not the ordinary \ac{MAP} decision~\eqref{eq:map_exact} over $\mathcal{M}$, and it takes values in the relaxed support $\{R_\ell e^{j\theta}\}$, such that a final hard slicing onto $\mathcal{M}$ is required whenever a discrete decision is needed, exactly as for $\hat{x}_{\mathrm{OMMSE}}$ and the \ac{LMMSE} baseline.
Maximizing the joint \emph{density} value instead adds the von Mises peak height $\kappa_\ell - \ln I_0(\kappa_\ell)$ to $\Lambda_\ell$, and the two rules are compared next.

\begin{figure*}[htbp]
\vspace{0.02in}
\centering
\includegraphics[width=1.8\columnwidth]{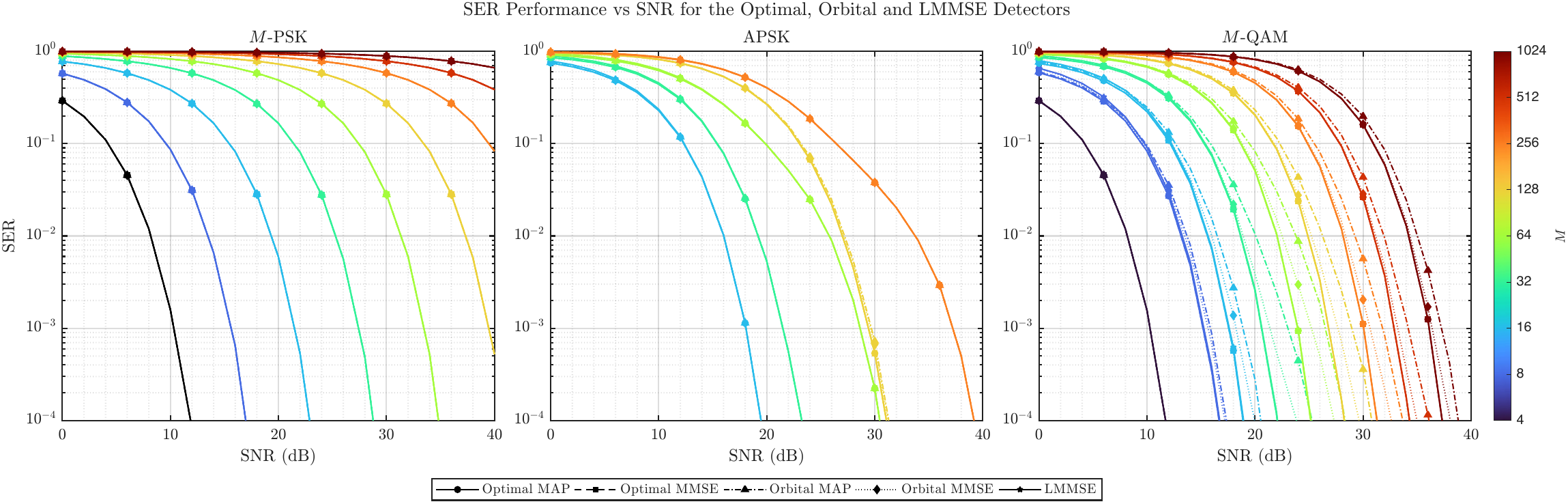}
\vspace{-1ex}
\caption{\ac{SER} performance of the orbital \ac{MMSE} and \ac{MAP} detectors compared with their exact discrete counterparts and the energy-matched \ac{LMMSE} baseline.}
\label{fig:SER}
\vspace{1ex}
\centering
\includegraphics[width=1.8\columnwidth]{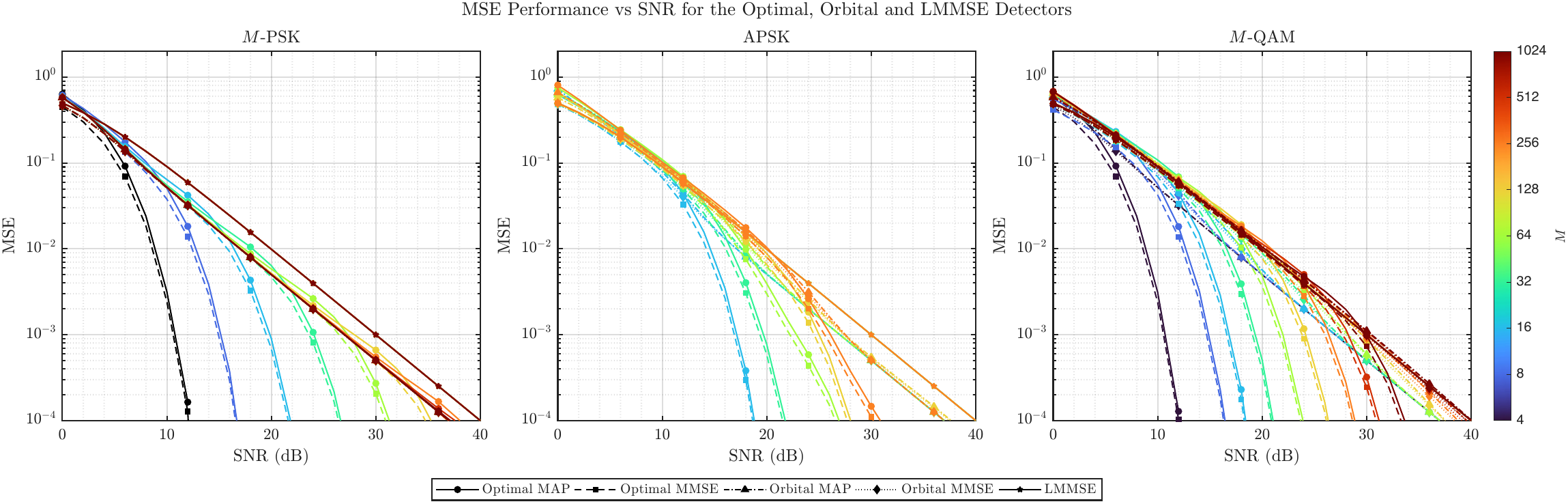}
\vspace{-1ex}
\caption{\ac{MSE} performance of the orbital \ac{MMSE} and \ac{MAP} detectors compared with their exact discrete counterparts and the energy-matched \ac{LMMSE} baseline.}
\label{fig:MSE}
\vspace{-3ex}
\end{figure*}
\vspace{-1ex}

\begin{proposition}[Ring boundary: Posterior mass versus joint density]
\label{prop:ring_boundary}
Fix two neighboring rings $\ell$ and $\ell+1$, with $0 < R_\ell < R_{\ell+1}$, and set $\rho \triangleq \norm{y}$ and $\Delta_\ell \triangleq R_{\ell+1} - R_\ell > 0$. 
Both the posterior-mass rule $\arg\max_\ell \Lambda_\ell$ of~\eqref{eq:map_estimate} and the joint posterior-density mode $\arg\max_\ell \big[\Lambda_\ell + \kappa_\ell - \ln I_0(\kappa_\ell)\big]$ select the outer ring $\ell+1$ over $\ell$ if and only if $\rho$ exceeds a threshold $\rho_\ell \in \mathbb{R}$, and these thresholds are given as
\vspace{-1ex}
\begin{subequations}
\label{eq:ring_boundaries}
\begin{align}
\label{eq:boundary_density}
\rho^{\mathrm{den}}_\ell &= \frac{R_\ell + R_{\ell+1}}{2} \;-\; \frac{\sigma_{\mathsf{z}}^{2}}{2\Delta_\ell}\,\ln\frac{r_{\ell+1}}{r_\ell},\\[0.3ex]
\label{eq:boundary_mass}
\rho^{\mathrm{mass}}_\ell &= \rho^{\mathrm{den}}_\ell
\;+\; \underbrace{\frac{\sigma_{\mathsf{z}}^{2}}{4\Delta_\ell}\,\ln\frac{R_{\ell+1}}{R_\ell}}_{\textstyle \triangleq\, \delta_\ell \,>\, 0}
\;+\; \mathcal{O}\big(\sigma_{\mathsf{z}}^{4}\big),
\end{align}
\end{subequations}
where~\eqref{eq:boundary_density} is exact and~\eqref{eq:boundary_mass} holds as $\mathsf{snr}\to\infty$.

Whenever $\rho^{\mathrm{den}}_\ell>0$, the strict ordering $\rho^{\mathrm{mass}}_\ell > \rho^{\mathrm{den}}_\ell$ moreover holds \emph{exactly}, at every $\sigma_{\mathsf{z}}^{2}>0$;
and whenever the density threshold is positive, the mass boundary is displaced \emph{outward} relative to the density boundary.
At high SNR, their separation is $\delta_\ell = \mathcal{O}(1/\mathsf{snr}) > 0$.
\end{proposition}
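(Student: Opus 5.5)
The plan is to reduce both rules to the sign of a pairwise log-metric difference, viewed as a function of $\rho=\norm{y}$ alone. Since $\kappa_\ell = 2R_\ell\rho/\sigma_{\mathsf{z}}^2$, I will introduce $g(\kappa)\triangleq \kappa-\ln I_0(\kappa)$, which is the von Mises peak-height term, and write
\begin{align*}
D^{\mathrm{den}}(\rho) &\triangleq \ln\tfrac{r_{\ell+1}}{r_\ell} - \tfrac{R_{\ell+1}^2-R_\ell^2}{\sigma_{\mathsf{z}}^2} + \tfrac{2\Delta_\ell\rho}{\sigma_{\mathsf{z}}^2},\\
D^{\mathrm{mass}}(\rho) &\triangleq \Lambda_{\ell+1}-\Lambda_\ell = D^{\mathrm{den}}(\rho) - \big[g(\kappa_{\ell+1})-g(\kappa_\ell)\big].
\end{align*}
The expression for $D^{\mathrm{den}}$ follows because in $\Lambda_\ell+\kappa_\ell-\ln I_0(\kappa_\ell)$ the Bessel terms cancel, leaving $\ln r_\ell - R_\ell^2/\sigma_{\mathsf{z}}^2+\kappa_\ell$.

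\textbf{Density threshold.} $D^{\mathrm{den}}$ is affine and strictly increasing in $\rho$ with slope $2\Delta_\ell/\sigma_{\mathsf{z}}^2$. Solving $D^{\mathrm{den}}=0$ and using $R_{\ell+1}^2-R_\ell^2=\Delta_\ell(R_\ell+R_{\ell+1})$ gives \eqref{eq:boundary_density} exactly.

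\textbf{Mass threshold: existence and uniqueness.} I will show $D^{\mathrm{mass}}$ is strictly increasing in $\rho>0$. Since $\tfrac{d}{d\kappa}\ln I_0 = A(\kappa)$,
\[
\frac{dD^{\mathrm{mass}}}{d\rho} = \frac{2}{\sigma_{\mathsf{z}}^2}\big[R_{\ell+1}A(\kappa_{\ell+1}) - R_\ell A(\kappa_\ell)\big].
\]
This is positive because $A$ is positive and increasing on $(0,\infty)$, $\kappa_{\ell+1}>\kappa_\ell$, and $R_{\ell+1}>R_\ell$. Moreover $D^{\mathrm{mass}}(\rho)\to\infty$ as $\rho\to\infty$, since $\ln I_0(\kappa)\sim\kappa$ and the linear growth then dominates. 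Hence a unique crossing $\rho^{\mathrm{mass}}_\ell$ exists. If $D^{\mathrm{mass}}(0)\ge 0$, the outer ring wins for every $\rho>0$ and the threshold is any nonpositive real; I will state this convention explicitly.

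\textbf{Exact ordering.} Since $g'(\kappa)=1-A(\kappa)>0$, $g$ is strictly increasing. If $\rho^{\mathrm{den}}_\ell>0$, then $\kappa_{\ell+1}>\kappa_\ell$ at $\rho=\rho^{\mathrm{den}}_\ell$, so
\[
D^{\mathrm{mass}}(\rho^{\mathrm{den}}_\ell) = -\big[g(\kappa_{\ell+1})-g(\kappa_\ell)\big]<0 .
\]
By the monotonicity just shown, the zero of $D^{\mathrm{mass}}$ lies strictly to the right, so $\rho^{\mathrm{mass}}_\ell>\rho^{\mathrm{den}}_\ell$ for every $\sigma_{\mathsf{z}}^2>0$. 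This is the outward-displacement claim.

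\textbf{High-SNR expansion.} I will use the Hankel asymptotic
\[
\ln I_0(\kappa)=\kappa-\tfrac12\ln(2\pi\kappa)+\tfrac{1}{8\kappa}+\mathcal{O}(\kappa^{-2}),
\]
which gives
\[
g(\kappa_{\ell+1})-g(\kappa_\ell)=\tfrac12\ln\tfrac{R_{\ell+1}}{R_\ell}+\mathcal{O}\!\Big(\tfrac{\sigma_{\mathsf{z}}^2}{\rho}\Big).
\]
Setting $D^{\mathrm{mass}}=0$ then yields
\[
\tfrac{2\Delta_\ell}{\sigma_{\mathsf{z}}^2}\big(\rho-\rho^{\mathrm{den}}_\ell\big)=\tfrac12\ln\tfrac{R_{\ell+1}}{R_\ell}+\mathcal{O}(\sigma_{\mathsf{z}}^2),
\]
that is, $\rho^{\mathrm{mass}}_\ell=\rho^{\mathrm{den}}_\ell+\delta_\ell+\mathcal{O}(\sigma_{\mathsf{z}}^4)$. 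It also gives $\delta_\ell=\mathcal{O}(\sigma_{\mathsf{z}}^2)=\mathcal{O}(1/\mathsf{snr})$, with $\delta_\ell>0$ because $R_{\ell+1}>R_\ell$.

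\textbf{Main obstacle.} I expect the main difficulty to be making the remainder uniform. The $\mathcal{O}(\sigma_{\mathsf{z}}^2/\rho)$ term is small only if $\rho$ stays bounded away from $0$ near the crossing. I will handle this with a bootstrapping argument. As $\sigma_{\mathsf{z}}^2\to0$, $\rho^{\mathrm{den}}_\ell\to(R_\ell+R_{\ell+1})/2>0$. The crude bound $0\le g(\kappa_{\ell+1})-g(\kappa_\ell)\le \tfrac12\ln(R_{\ell+1}/R_\ell)+C$ then confines $\rho^{\mathrm{mass}}_\ell$ to an $\mathcal{O}(\sigma_{\mathsf{z}}^2)$ neighbourhood of this midpoint. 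On that neighbourhood $\kappa\to\infty$ uniformly, so the asymptotic expansion and an implicit-function argument apply.
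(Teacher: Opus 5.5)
Your proposal is correct and follows the paper's proof essentially step for step. The shared steps are the Bessel cancellation giving an affine density gap, the $\rho$-derivative $2[R_{\ell+1}A(\kappa_{\ell+1})-R_\ell A(\kappa_\ell)]/\sigma_{\mathsf{z}}^2>0$ for the mass gap, the strictly increasing peak-height function $\kappa-\ln I_0(\kappa)$ (the paper's $\varphi$) for the exact ordering, and the large-$\kappa$ expansion of $\ln I_0$ for $\delta_\ell$. Your extra steps add rigor to points the paper only asserts: you show that the mass gap actually crosses zero, you fix a convention when $D^{\mathrm{mass}}(0)\ge 0$, and you use a bootstrap to confine the root to $(R_\ell+R_{\ell+1})/2+\mathcal{O}(\sigma_{\mathsf{z}}^2)$ before applying the asymptotics.
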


\vspace{-2ex}
\begin{proof}
Adding $\kappa_\ell - \ln I_0(\kappa_\ell)$ to $\Lambda_\ell$ of~\eqref{eq:ring_logmetric} cancels the Bessel term and, using $\kappa_\ell = 2R_\ell\rho/\sigma_{\mathsf{z}}^2$, gives the density metric $\ln r_\ell - (R_\ell-\rho)^2/\sigma_{\mathsf{z}}^2$ up to the $\ell$-independent $\rho^2/\sigma_{\mathsf{z}}^2$. 
Its increment across the pair has $\rho$-derivative $2\Delta_\ell/\sigma_{\mathsf{z}}^2 > 0$, so the rule is a threshold rule, and equating the two metrics and dividing by $2\Delta_\ell/\sigma_{\mathsf{z}}^2$ yields~\eqref{eq:boundary_density}.
For the mass rule, $\mathrm{d}\ln I_0(\kappa_\ell)/\mathrm{d}\rho = 2R_\ell A(\kappa_\ell)/\sigma_{\mathsf{z}}^2$, so $\Lambda_{\ell+1}-\Lambda_\ell$ has $\rho$-derivative $2[R_{\ell+1}A(\kappa_{\ell+1}) - R_\ell A(\kappa_\ell)]/\sigma_{\mathsf{z}}^2 > 0$ for $\rho>0$, since $R_{\ell+1}>R_\ell$ and $A$ is strictly increasing with $\kappa_{\ell+1}>\kappa_\ell$; the rule is therefore also a threshold rule.
Substituting the expansion $\ln I_0(\kappa) = \kappa - \tfrac{1}{2}\ln(2\pi\kappa) + \mathcal{O}(\kappa^{-1})$ into~\eqref{eq:ring_logmetric} gives
\vspace{-1ex}
\begin{equation}
\label{eq:mass_metric_hs}
\Lambda_\ell = \ln r_\ell - \frac{(R_\ell-\rho)^2}{\sigma_{\mathsf{z}}^{2}} - \frac{1}{2}\ln R_\ell + c(\rho) + \mathcal{O}(\kappa_\ell^{-1}),
\vspace{-1ex}
\end{equation}
with $c(\rho)$ independent of $\ell$.

Equating~\eqref{eq:mass_metric_hs} across the pair and dividing by $2\Delta_\ell/\sigma_{\mathsf{z}}^2$ gives~\eqref{eq:boundary_density} plus $\tfrac{\sigma_{\mathsf{z}}^2}{4\Delta_\ell}\ln(R_{\ell+1}/R_\ell)$, while the residual contributes $\tfrac{\sigma_{\mathsf{z}}^2}{2\Delta_\ell}\mathcal{O}(\kappa^{-1}) = \mathcal{O}(\sigma_{\mathsf{z}}^{4})$, because the root lies at $\rho = (R_\ell+R_{\ell+1})/2 + \mathcal{O}(\sigma_{\mathsf{z}}^{2})$, which is bounded away from zero, so that $\kappa^{-1} = \mathcal{O}(\sigma_{\mathsf{z}}^{2})$, yielding \eqref{eq:boundary_mass}.
Let $\varphi(\kappa) \triangleq \kappa - \ln I_0(\kappa)$, so that the density gap exceeds the mass gap by $\varphi(\kappa_{\ell+1}) - \varphi(\kappa_\ell)$, which is strictly positive for $\rho>0$ because $\varphi'(\kappa) = 1 - A(\kappa) > 0$ and $\kappa_{\ell+1}>\kappa_\ell$, and which vanishes at $\rho=0$.
If $\rho^{\mathrm{den}}_\ell>0$, the mass gap is therefore negative at $\rho=0$ and strictly negative at $\rho^{\mathrm{den}}_\ell$, where the density gap vanishes, such that its root obeys $\rho^{\mathrm{mass}}_\ell > \rho^{\mathrm{den}}_\ell$ for every $\sigma_{\mathsf{z}}^{2}>0$.
\end{proof}

\vspace{-1ex}
The origin of $\delta_\ell$ is geometric. 
Since $I_0(\kappa)\sim e^{\kappa}/\sqrt{2\pi\kappa}$, the conditional phase density~\eqref{eq:vm_phase} peaks at $e^{\kappa_\ell}/\big(2\pi I_0(\kappa_\ell)\big)\sim\sqrt{\kappa_\ell/2\pi}\propto\sqrt{R_\ell}$ because the channel concentrates a fixed unit of probability, spread over the wider outer ring into a correspondingly narrower angular window.
Comparing joint density peaks therefore weights ring $\ell$ by an additional high-\ac{SNR} factor $\sqrt{R_\ell}$ relative to its mass $\hat{r}_\ell$, which is the outer-ring preference; equivalently, by~\eqref{eq:mass_metric_hs}, $\Lambda_\ell$ is the density metric penalized by $-\tfrac{1}{2}\ln R_\ell$ up to $\ell$-independent terms, and $\delta_\ell$ is the leading-order outward boundary displacement associated with this difference.
The hierarchical rule~\eqref{eq:map_estimate} is thus not a convenience but a quantifiable correction of order $1/\mathsf{snr}$ of the same order as the prior tilt already carried by the boundary, with the two rules are non-interchangeable at moderate \ac{SNR}.

\begin{remark}[Relation to prior ring constructions]
\label{rem:prior_art}
The identity behind~\eqref{eq:mmse_estimate} is not new: on one ring it is the $I_1/I_0$ denoiser of~\cite[App.~A]{Hara2026}, in general the Type-I formula of~\cite[Sec.~IV-E]{Tanahashi2011}, whose density matches Definition~\ref{def:orbital_prior}. 
New are that Theorem~\ref{thm:maxent} derives that density rather than positing it, that the Bessel functions are kept exact, and that~\cite{Tanahashi2011} gives no counterpart to~\eqref{eq:map_estimate} or Proposition~\ref{prop:ring_boundary}, its saving being $M^{N_{\mathrm{t}}}\!\to\!M^{N_{\mathrm{t}}-1}$, still $\mathcal{O}(M)$ per symbol.
\end{remark}

\newpage
\vspace{-2ex}
\section{Numerical Results}
\label{sec:numerical}
\vspace{-1ex}
We evaluate the orbital detectors on the scalar \ac{AWGN} channel~\eqref{eq:system_model} for three constellation families spanning the range of ring structure, namely $M$-\ac{PSK} ($L=1$), \ac{APSK}~\cite{Thomas1974,deGaudenzi2006,AneddaTBC2016} (a few rings), and $M$-\ac{QAM} ($L \ll M$ rings). 
The orbital \ac{MMSE}/\ac{MAP} detectors of Proposition~\ref{prop:orbital_estimators} are compared against the exact discrete detectors of~\eqref{eq:mmse_exact} and \eqref{eq:map_exact} and the energy-matched \ac{LMMSE} (Gaussian-prior) baseline. The \ac{SNR} is $\mathsf{snr} = \sigma_{\mathsf{x}}^2/\sigma_{\mathsf{z}}^2$ and each point averages $1\times10^{6}$ Monte Carlo symbols.

What is counted is the \emph{posterior/estimator evaluation complexity, excluding any final hard-slicing step where applicable}, since only the exact \ac{MAP} rule~\eqref{eq:map_exact} returns a discrete symbol, the others needing a final decision that adds the same term to each. 
The exact detectors require $\mathcal{O}(M)$ evaluations and the \ac{LMMSE} $\mathcal{O}(1)$, against $\mathcal{O}(L)$ for the orbital ones: $L=1$ for $M$-\ac{PSK}, small for \ac{APSK}, well under $M$ for $M$-\ac{QAM}.

Fig.~\ref{fig:SER} showcases the \ac{SER}. 
Across all families and orders the orbital detectors are all but indistinguishable from the exact discrete detectors. For $M$-\ac{PSK} the curves coincide, since a single ring reduces detection to a nearest-phase decision, while on the multi-ring $M$-\ac{QAM}/\ac{APSK} constellations only a small high-\ac{SNR} gap remains, which is the signature of the lone approximation, namely the relaxed phase. 
The \ac{LMMSE} baseline tracks the same waterfall, so hard-decision performance is essentially preserved across all three.

Finally, Fig.~\ref{fig:MSE} shows the \ac{MSE}, where the price of the phase relaxation becomes visible. 
The exact \ac{MMSE}/\ac{MAP} estimators waterfall, denoising both amplitude and phase, whereas the orbital and \ac{LMMSE} estimators share a shallower, phase-limited decay, since neither sharpens the phase beyond the observed $\angle y$. 
The orbital \ac{MMSE} nonetheless sits below the \ac{LMMSE} on every family, including $M$-\ac{PSK}: the two estimators differ even at $L=1$, since $\hat{x}_{\mathrm{OMMSE}}$ of~\eqref{eq:mmse_estimate} exploits the known radius through $A(\kappa_1)$, whereas the linear estimator scales $y$. 
Once the ring is resolved, only the tangential error survives, so the orbital \ac{MSE} tends to $\sigma_{\mathsf{z}}^2/2$ against the $\sigma_{\mathsf{z}}^2$ of the \ac{LMMSE}, a factor of two on every family. 
The orbital-optimal gap is therefore exactly the discrete phase the orbital prior relaxes by construction, which is the deliberate trade made in exchange for a closed-form $\mathcal{O}(L)$ posterior.

\section{Conclusion}
\label{sec:conclusion}
The maximum-entropy, radial-marginal-preserving prior induces a closed-form posterior, a softmax over $L$ amplitudes times a von Mises phase, from which orbital \ac{MMSE} and \ac{MAP} detectors are derived at $\mathcal{O}(L)$ per symbol, matching the exact discrete \ac{SER} to within a small high-\ac{SNR} gap. 
The ring relaxation and Bessel-ratio denoiser are prior art~\cite{Tanahashi2011,Hara2026}; what this article adds is that they are not approximations of convenience, with the orbital prior being the unique amplitude-preserving, phase-noncommittal relaxation of the constellation, together with the exact ring metric and the hierarchical \ac{MAP} rule that follow from it, whose pairwise posterior-mass threshold is shifted outward relative to the joint posterior-density threshold by a quantifiable factor upto a leading order at high \ac{SNR}.

\balance
\bibliographystyle{IEEEtran}
\bibliography{references}

@ARTICLE{GuoWuShamaiVerdu_TIT_2011,
  author={Guo, Dongning and Wu, Yihong and Shitz, Shlomo S. and Verdú, Sergio},
  journal={IEEE Transactions on Information Theory}, 
  title={Estimation in Gaussian Noise: Properties of the Minimum Mean-Square Error}, 
  year={2011},
  volume={57},
  number={4},
  pages={2371-2385},
  doi={10.1109/TIT.2011.2111010}}

@ARTICLE{ranganVAMP2019,
  author={Rangan, Sundeep and Schniter, Philip and Fletcher, Alyson K.},
  journal={IEEE Transactions on Information Theory}, 
  title={Vector Approximate Message Passing}, 
  year={2019},
  volume={65},
  number={10},
  pages={6664-6684},
  doi={10.1109/TIT.2019.2916359}}

@article{Donoho2009,
  title={Message-passing algorithms for compressed sensing},
  author={Donoho, David L and Maleki, Arian and Montanari, Andrea},
  journal={Proceedings of the National Academy of Sciences},
  volume={106},
  number={45},
  pages={18914--18919},
  year={2009},
  publisher={National Academy of Sciences}
}

@INPROCEEDINGS{Rangan2011,
  author={Rangan, Sundeep},
  booktitle={2011 IEEE International Symposium on Information Theory Proceedings}, 
  title={Generalized approximate message passing for estimation with random linear mixing}, 
  year={2011},
  volume={},
  number={},
  pages={2168-2172},
  doi={10.1109/ISIT.2011.6033942}}

@article{Suresh2026,
  author  = {S. Suresh and A. Michon and C. Poulliat and M. Guillaud and C. Goursaud},
  title   = {Leveraging von Mises Message-Passing for Massive MIMO Detection},
  journal = {Proc. IEEE International Workshop on Signal Processing and Artificial Intelligence in Wireless Communications (SPAWC)},
  year    = {2026}
}

@ARTICLE{Thomas1974,
  author={Thomas, C. and Weidner, M. and Durrani, S.},
  journal={IEEE Transactions on Communications}, 
  title={Digital Amplitude-Phase Keying with M-Ary Alphabets}, 
  year={1974},
  volume={22},
  number={2},
  pages={168-180},
  doi={10.1109/TCOM.1974.1092165}}

@ARTICLE{Guo2005,
  author={Dongning Guo and Shamai, S. and Verdu, S.},
  journal={IEEE Transactions on Information Theory}, 
  title={Mutual information and minimum mean-square error in Gaussian channels}, 
  year={2005},
  volume={51},
  number={4},
  pages={1261-1282},
  doi={10.1109/TIT.2005.844072}}

@ARTICLE{deGaudenzi2006,
  author={De Gaudenzi, R. and Guillen i Fabregas, A. and Martinez, A.},
  journal={IEEE Transactions on Wireless Communications}, 
  title={Performance analysis of turbo-coded APSK modulations over nonlinear satellite channels}, 
  year={2006},
  volume={5},
  number={9},
  pages={2396-2407},
  doi={10.1109/TWC.2006.1687763}}

@ARTICLE{Bayati2011,
  author={Bayati, Mohsen and Montanari, Andrea},
  journal={IEEE Transactions on Information Theory}, 
  title={The Dynamics of Message Passing on Dense Graphs, with Applications to Compressed Sensing}, 
  year={2011},
  volume={57},
  number={2},
  pages={764-785},
  doi={10.1109/TIT.2010.2094817}}

@book{Abramowitz1965,
  title={Handbook of Mathematical Functions: With Formulas, Graphs, and Mathematical Tables},
  author={Abramowitz, M. and Stegun, I.A.},
  isbn={9780486612720},
  lccn={lc65012253},
  series={Applied mathematics series},
  url={https://books.google.nl/books?id=MtU8uP7XMvoC},
  year={1965},
  publisher={Dover Publications}
}

@book{mardia2009directional,
  title={Directional Statistics},
  author={Mardia, K.V. and Jupp, P.E.},
  isbn={9780470317815},
  series={Wiley Series in Probability and Statistics},
  url={https://books.google.nl/books?id=PTNiCm4Q-M0C},
  year={2009},
  publisher={Wiley}
}

@ARTICLE{AneddaTBC2016,
  author={Anedda, Matteo and Meloni, Alessio and Murroni, Maurizio},
  journal={IEEE Transactions on Broadcasting}, 
  title={64-APSK Constellation and Mapping Optimization for Satellite Broadcasting Using Genetic Algorithms}, 
  year={2016},
  volume={62},
  number={1},
  pages={1-9},
  doi={10.1109/TBC.2015.2470134}}

@book{DLMF,
  author    = {{National Institute of Standards and Technology}},
  title     = {NIST Digital Library of Mathematical Functions},
  editor    = {Olver, Frank W. J. and Lozier, Daniel W. and Boisvert, Ronald F. and Clark, Charles W.},
  publisher = {Cambridge University Press},
  address   = {Cambridge, UK},
  year      = {2010},
  url       = {https://dlmf.nist.gov/},
  note      = {Release 1.0.28 (or later), accessed: 2026-05-04}
}

@ARTICLE{Cespedes2014,
  author={Céspedes, Javier and Olmos, Pablo M. and Sánchez-Fernández, Matilde and Perez-Cruz, Fernando},
  journal={IEEE Transactions on Communications}, 
  title={Expectation Propagation Detection for High-Order High-Dimensional MIMO Systems}, 
  year={2014},
  volume={62},
  number={8},
  pages={2840-2849},
  doi={10.1109/TCOMM.2014.2332349}}

@book{CoverThomas2006,
  author    = {T. M. Cover and J. A. Thomas},
  title     = {Elements of Information Theory},
  edition   = {2nd},
  publisher = {Wiley-Interscience},
  year      = {2006},
}

@book{Watson1944,
  author    = {G. N. Watson},
  title     = {A Treatise on the Theory of Bessel Functions},
  edition   = {2nd},
  publisher = {Cambridge University Press},
  year      = {1944}
}

@ARTICLE{YangHanzo2015,
  author={Yang, Shaoshi and Hanzo, Lajos},
  journal={IEEE Communications Surveys \& Tutorials}, 
  title={Fifty Years of MIMO Detection: The Road to Large-Scale MIMOs}, 
  year={2015},
  volume={17},
  number={4},
  pages={1941-1988},
  doi={10.1109/COMST.2015.2475242}}

@book{proakis2007digital,
  title={Digital Communications},
  author={Proakis, J.G. and Salehi, M.},
  isbn={9780071263788},
  lccn={2007036509},
  series={McGraw-Hill International Edition},
  url={https://books.google.de/books?id=ksh0GgAACAAJ},
  year={2008},
  publisher={McGraw-Hill}
}

@book{tse2005fundamentals,
  title={Fundamentals of Wireless Communication},
  author={Tse, D. and Viswanath, P.},
  isbn={9780521845274},
  lccn={2006272166},
  series={Wiley series in telecommunications},
  url={https://books.google.de/books?id=66XBb5tZX6EC},
  year={2005},
  publisher={Cambridge University Press}
}

@ARTICLE{ma2017orthogonal,
  author={Ma, Junjie and Ping, Li},
  journal={IEEE Access}, 
  title={Orthogonal AMP}, 
  year={2017},
  volume={5},
  number={},
  pages={2020-2033},
  doi={10.1109/ACCESS.2017.2653119}}

@article{Jaynes1957,
  title = {Information Theory and Statistical Mechanics},
  author = {Jaynes, E. T.},
  journal = {Phys. Rev.},
  volume = {106},
  issue = {4},
  pages = {620--630},
  numpages = {0},
  year = {1957},
  month = {May},
  publisher = {American Physical Society},
  doi = {10.1103/PhysRev.106.620},
  url = {https://link.aps.org/doi/10.1103/PhysRev.106.620}
}

@ARTICLE{WuVerd2010,
  author={Wu, Yihong and Verdú, Sergio},
  journal={IEEE Transactions on Information Theory}, 
  title={Rényi Information Dimension: Fundamental Limits of Almost Lossless Analog Compression}, 
  year={2010},
  volume={56},
  number={8},
  pages={3721-3748},
  doi={10.1109/TIT.2010.2050803}}

@article{Renyi1959,
  title={On the dimension and entropy of probability distributions},
  author={Alfr{\'e}d R{\'e}nyi},
  journal={Acta Mathematica Academiae Scientiarum Hungarica},
  year={1959},
  volume={10},
  pages={193-215},
  url={https://api.semanticscholar.org/CorpusID:121006720}
}

@ARTICLE{Tanahashi2011,
  author  = {Tanahashi, Makoto and Ochiai, Hideki},
  journal = {IEEE Trans. Signal Process.},
  title   = {A New Reduced-Complexity Conditional-Mean Based {MIMO} Signal Detection Using Symbol Distribution Approximation Technique},
  year    = {2011},
  volume  = {59},
  number  = {11},
  pages   = {5644--5651},
  doi     = {10.1109/TSP.2011.2163064}
}

@ARTICLE{Hara2026,
  author  = {Hara, Shoma and Takahashi, Takumi and Iimori, Hiroki and Ochiai, Hideki and Larsson, Erik G.},
  journal = {IEEE Trans. Wireless Commun.},
  title   = {Reciprocity Calibration of Dual-Antenna Repeaters via {MMSE} Estimation},
  year    = {2026},
  volume  = {25},
  pages   = {21244--21260},
  doi     = {10.1109/TWC.2025.3590790}
}

@article{ranasinghe2026orbital,
  title={Orbital Detection},
  author={Ranasinghe, Kuranage Roche Rayan and de Abreu, Giuseppe Thadeu Freitas},
  journal={arXiv preprint arXiv:2608.09362},
  year={2026}
}

\end{document}